\title{\LARGE \bf
A Penetration Metric for Deforming Tetrahedra using Object Norm
}
\author{Jisu Kim$^{\dagger}$ and Young J. Kim$^{\dagger}$
\thanks{$^\dagger$J. Kim and Y. J. Kim are with the Department of Computer Science and Engineering at Ewha Womans University in Korea  {\tt\small cwyh5526@ewhain.net,kimy@ewha.ac.kr}}
}
\theoremstyle{plain}
\newtheorem{theorem}{Theorem}
\newtheorem{definition}{Definition} 
\newcommand{\PDd}{\mathrm{PD}_d}
\newcommand{\Ta}{\mathcal{T}_1} 
\newcommand{\Tb}{\mathcal{T}_2} 
\DeclareMathOperator*{\argmin}{argmin}
\begin{document}

\maketitle
\thispagestyle{empty}
\pagestyle{empty}

\begin{abstract}
In this paper, we propose a novel penetration metric, called {\it deformable penetration depth} $\PDd$, to define a measure of inter-penetration between two linearly deforming tetrahedra using the object norm \cite{ObjectNorm}.  First of all, we show that a distance metric for a tetrahedron deforming between two configurations can be found in closed form based on object norm. Then, we show that the $\PDd$ between an intersecting pair of static and deforming tetrahedra can be found by solving a quadratic programming (QP) problem in terms of the distance metric with non-penetration constraints. We also show that the $\PDd$ between two, intersected, deforming tetrahedra can be found by solving a similar QP problem under some assumption on penetrating directions, and it can be also accelerated by an order of magnitude using pre-calculated penetration direction.  We have implemented our algorithm on a standard PC platform using an off-the-shelf QP optimizer, and experimentally show that both the static/deformable and deformable/deformable tetrahedra cases can be solvable in from a few to tens of milliseconds. Finally, we demonstrate that our penetration metric is three-times smaller (or tighter) than the classical, rigid penetration depth metric in our experiments.
\end{abstract}

\section{INTRODUCTION}

\vspace{-1mm}
Handling soft objects is increasingly popular and becoming important in the robotics community. For instance, soft object manipulation is gaining broad attention recently due to robustness in manipulation and practical real-world demand \cite{kim2013soft}. In bio-inspired robotics, compliant actuation is a must to mimic that of biological entities. In all these areas,  accurately simulating deformation for soft objects possibly with intensive contact is very crucial to ensure the robustness of robots and reduce their building cost.

\vspace{-1mm}
The finite element method (FEM) is a general method to simulate deformable motion for soft objects with diverse material and structural properties, and it has been extensively studied for many decades in the area of computer-aided engineering (CAE), structural dynamics, computer animation and soft robotics \cite{rao2017finite,sifakis2012fem}. Typically, the FEM models a soft object with a network of many finite elements (FEs), such as tetrahedra. In order to accurately simulate contact dynamics using FEM, it is important to define a proper penetration measure between FEs and apply proper responsive forces to them.

\vspace{-1mm}
However, unfortunately, there has been no rigorous formulation in the literature to define a penetration measure or metric for deformable FEs, in particular tetrahedra. In this paper, we propose a novel penetration metric, called {\it deformable penetration depth} $\PDd$, to define a measure of inter-penetration between two linearly deforming tetrahedra using the object norm \cite{ObjectNorm}.
The main contribution of our paper can be summarized as follows:
\begin{itemize}
    \item In order to define $\PDd$, we generalize the concept of object norm  to a deformable case, and show that a distance metric for a tetrahedron deforming between two configurations can be found in closed form.
    
    \item  Case 1: we show that the $\PDd$ between an intersected pair of one static and one deforming tetrahedra can be found by solving a quadratic programming (QP) problem over all possible penetrating directions in terms of the distance metric with non-penetration constraints. 
    
    \item Case 2: we show that the $\PDd$ between an intersected pair of deforming tetrahedra can be found by solving a similar QP problem.
    
    
    \item  Case 3: We approximate the case 2 using rigid penetration depth calculation based on the separating axis theorem, and demonstrate that this computation can be accelerated by an order of magnitude compared to the case 2 while the approximation error is kept to less than by 5\%.
    
    \item We compare the metric results of our deformable penetration depth against those of rigid penetration depth and show that our results are three-times tighter than the rigid case.

\end{itemize}

The rest of this paper is organized as follows. In the next section, we briefly survey works relevant to penetration depth computation. In Sec.~\ref{sec:objnorm}, we show a closed-form solution of our penetration metric and subsequently apply it to obtain $\PDd$ for static vs. deformable (Sec.~\ref{sec:case1}) and deformable vs. deformable tetrahedral cases (Sec.~\ref{sec:case2}). In Sec.~\ref{sec:case3}, we show that these solutions can be obtained more rapidly. In Sec.~\ref{sec:results}, we show various experimental results of our penetration depth algorithms and conclude the paper in Sec.~\ref{sec:conc}.


\section{RELATED WORKS}

\subsection{Penetration Metric for Rigid Objects}
Various penetration metrics for rigid objects have been suggested in the literature \cite{LMK17}. The (translational) penetration depth $\mathrm{PD}$ is the most well-known metric, defined by a minimal distance that separates two intersecting objects \cite{cameron1986determining, dobkin1993computing}. 
Approximation methods have been proposed for convex models \cite{van2001proximity, kim2004incremental}, and for non-convex models, exact $\mathrm{PD}$ \cite{Hachenberger09} and approximations \cite{kim2002fast, RedonLin06, Lien08}, \cite{lien09}, and real-time algorithms \cite{PolyDepth} are suggested. Kim et al. \cite{kim2015hybrid} presented a hybrid algorithm that combines local optimization with machine learning.

The continuity of penetration metric has been considered; the Phong projection \cite{lee2015phongpd} and dynamic Minkowski sums \cite{lee2016continuous} were used to compute continuous $\mathrm{PD}$. Both positive and negative distances (i.e., PD), and the minimal distance value over a continuous time interval can be computed \cite{lee2019contdist}.
The generalized penetration depth $\mathrm{PD}_{g}$ is defined as a minimal rigid transformation that interpenetrating objects must undergo to resolve the penetration \cite{zhang2007generalized}.
%
The works by \cite{Weller-RSS-09, Allard2010, wang2012adaptive} use penetration volume as a continuous penetration metric. Nirel and Lischinski \cite{nirel2018fast} also presented a volume-based global collision resolution method. The pointwise penetration depth is defined as a distance between the deepest intersecting points of two objects \cite{tang2009interactive}.
\vspace{-1.5mm}

\subsection{Penetration Metric for Deformable Objects}

\vspace{-1mm}
Computing a penetration metric (or penetration depth) for deformable objects has been relatively less studied than for rigid objects. In particular, no attempt has been made to rigorously define such a metric, and our work tackles this problem.  

\vspace{-1mm}
Distance-fields-based representations are often used  to compute penalty forces for deformable objects \cite{fisher2001fast} and can be accelerated using GPUs \cite{hoff2002fast, sud2006fpc}.
Heidelberger et al. \cite{heidelberger2004consistent} improved the consistency of penetration depth using a propagation scheme.
There are few FEM-based methods that calculate penetration metric.
\cite{hirota2001implicit} suggested material depth, an approximation of distance field, which is the distance between the interior and object-boundary points. 
The energy-based method \cite{teran2005robust} also used a signed distance in material space to compute the penetration depth.
Layered depth images (LDIs) \cite{Shade:1998:LDI:280814.280882} are a data structure representing multiple layers of geometry rendered from a fixed viewpoint. Heidelberger et al. \cite{heidelberger2003real, heidelberger2004detection} suggested a method to estimate penetration volume using pixel depths and normals from LDIs.  
The methods by \cite{faure2012sofa} used a surface rasterization method based on LDIs \cite{faure2008image, Allard2010} to compute repulsive forces. 

\vspace{-1mm}
However, most of these existing works are heuristically defined and more seriously, they do not guarantee full separation of intersecting objects. In other words, all these metrics provide only a lower bound of penetration depth, not an upper bound, unlike our method. 

\section{Metric Formulation}\label{sec:objnorm}

\vspace{-1mm}
Given a tetrahedron $\mathcal{T} \in \mathbb{R}^3$ linearly deforming between rest $\mathbf{q}_0 \in \mathbb{R}^{12}$ and deformed configurations $\mathbf{q_1} \in \mathbb{R}^{12}$, 
an arbitrary point $\mathbf{r} \in \mathcal{T}(\mathbf{q_0})$ and its counterpart $\mathbf{p} \in \mathcal{T}(
\mathbf{q_1})$ can be represented using barycentric coordinate $\mathbf{b}=(b_1, b_2, b_3)$ which is constant during deformation \cite{sifakis2012fem} (also illustrated in Fig.~\ref{fig:linearDeformation}):
\vspace{-1mm}
\begin{equation}\nonumber
\begin{split}
    \mathbf{r} &= \mathbf{r}_0+(\mathbf{r}_1-\mathbf{r}_0)b_1 +(\mathbf{r}_2-\mathbf{r}_0)b_2+(\mathbf{r}_3-\mathbf{r}_0)b_3\\
    \mathbf{p} &= \mathbf{p}_0+(\mathbf{p}_1-\mathbf{p}_0)b_1 +(\mathbf{p}_2-\mathbf{p}_0)b_2+(\mathbf{p}_3-\mathbf{p}_0)b_3.
\end{split}
\end{equation}

\vspace{-1.5mm}
Then, the distance metric $\sigma(\mathbf{q_0, q_1})$ using object norm \cite{ObjectNorm,fap} can be formulated as: 
\begin{equation}\label{eq:objdef}
\sigma(\mathbf{q_0, q_1})=\frac{1}{V} \int_{} \left\Vert \mathbf{p}-\mathbf{r} \right\Vert^2 dV
={6}\!\int\! \left\Vert {\mathbf{D}\mathbf{b}+\mathbf{d}_0}\right\Vert ^2 d\mathbf{b},
\end{equation}

\begin{figure}[htb]
      \centering\subfigure[Rest $\mathbf{q}_0$] {\includegraphics[height=2.5cm]{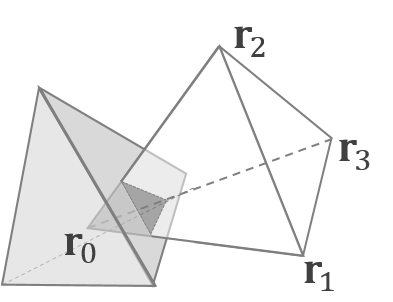}}
       \centering\subfigure[Deformed $\mathbf{q}_1$] {\includegraphics[height=2.5cm]{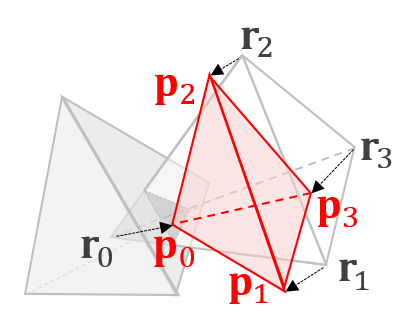}}
       \vspace{-1.5mm}
      \caption{Linear deformation of a tetrahedron to resolve  inter-penetration from the rest $\mathbf{q}_0$ to deformed configuration $\mathbf{q}_1$.}
      \label{fig:linearDeformation}
  \end{figure}
  
where $V=\frac{1}{6}$ is the volume of $\mathcal{T}$ in terms of the barycentric coordinate,  $\mathbf{d}_i=(\mathbf{p}_i-\mathbf{r}_i)$,  
and $\mathbf{D} \in \mathbb{R}^{3\times3}=[\mathbf{d}_1-\mathbf{d}_0 \vert \mathbf{d}_2-\mathbf{d}_0 \vert \mathbf{d}_3-\mathbf{d}_0 ]$. Eq.~\ref{eq:objdef} induces a new tetrahedron $\mathcal{T}_d$, called {\it displacement tetrahedron}, with four vertices $\mathbf{d}_i=(\mathbf{p}_i-\mathbf{r}_i), i=0, \cdots, 3$, and the object norm is a sum of squared distances from $\mathbf{d}_0$ for $\forall \mathbf{d} \in \mathcal{T}_d$.


\vspace{-1mm}
Let $\mathbf{d}_{i}=(x_i,y_i,z_i)^T$. Then the closed-form solution of the proposed metric can be calculated as: \
\begin{equation}\label{eq:closed}
\sigma(\mathbf{q_0, q_1})
=\frac{1}{10} \sum_{\substack{\forall i\geq j\in\{0,\dots,3\}}}\mathbf{d}_{i}^T \mathbf{d}_{j}\\
\end{equation}
Let $\mathbf{I}_d$ be the inertia tensor of the displacement tetrahedron $\mathcal{T}_d$. Then, the moments of inertia of $\mathbf{I}_d$, $\mathbf{I}_{xx}, \mathbf{I}_{yy}, \mathbf{I}_{zz}$, with {\it constant} mass density $\mu_d$ are:
\begin{equation}\nonumber
\begin{aligned}
\mathbf{I}_{xx} &=& \frac{1}{60} \mu_d \vert det(\mathbf{D}) \vert  \sum_{\forall i\geq j}\mathbf{d}_{i}^T [\mathbf{0} \; \mathbf{e}_2 \; \mathbf{e}_3 ] \mathbf{d}_{j}\\
\mathbf{I}_{yy} &=& \frac{1}{60} \mu_d \vert det(\mathbf{D}) \vert  \sum_{\forall i\geq j}\mathbf{d}_{i}^T [\mathbf{e}_1 \; \mathbf{0} \; \mathbf{e}_3 ] \mathbf{d}_{j}\\
\mathbf{I}_{zz} &=& \frac{1}{60} \mu_d \vert det(\mathbf{D}) \vert  \sum_{\forall i\geq j}\mathbf{d}_{i}^T [\mathbf{e}_1 \; \mathbf{e}_2 \; \mathbf{0} ] \mathbf{d}_{j},
\end{aligned}
\end{equation}
where $\mathbf{e}_i$ is the standard basis for $\mathbb{R}^3$.
Then, since $\vert det(\mathbf{D}) \vert = 6V_d = 1$ where $V_d$ is the volume of $\mathcal{T}_d$ (which is always $\frac{1}{6}$), 
\begin{equation}\label{eq:objnorm}
\begin{split}
\sigma(\mathbf{q_0, q_1})&= \frac{3}{\mu_d }(\mathbf{I}_{xx}+\mathbf{I}_{yy}+\mathbf{I}_{zz})=\frac{3}{\mu_d}tr(\mathbf{I_d})
\end{split}
\end{equation}

\vspace{-1mm}
{\bf Definition of Deformable PD}
We use Eq.~\ref{eq:closed} or \ref{eq:objnorm} to define the deformable penetration depth $\PDd$ as follows. 
Given an intersecting pair of linearly deforming tetrahedra, $\Ta, \Tb$, with the corresponding, initial configurations $\mathbf{q}_0^{\Ta}, \mathbf{q}_0^{\Tb}$, their $\PDd$ is:
\begin{equation}\label{eq:d_PD_definition}
\begin{split}
&\PDd(\Ta(\mathbf{q}_0^{\Ta}), \Tb(\mathbf{q}_0^{\Tb}))\\
&=\min_{(\mathbf{q}_1^{\Ta},\mathbf{q}_1^{\Tb})\in \mathcal{C}} \sqrt{\sigma(\mathbf{q}_0^{\Ta}, \mathbf{q}_1^{\Ta}) +
\sigma(\mathbf{q}_0^{\Tb}, \mathbf{q}_1^{\Tb})}
\end{split}
\end{equation}
where $\mathcal{C} \subset \mathbb{R}^{24}$ is the contact (configuration) space imposed by $\Ta, \Tb$. In Sec.~\ref{sec:case1}, we present our $\PDd$ computation algorithm when $\Ta$ is static; i.e.,  $\mathbf{q}_0^{\Ta}=\mathbf{q}_1^{\Ta}$, and relax this restriction in Sec.~\ref{sec:case2} by allowing both $\Ta, \Tb$ deformable.

{\bf Generalization of Rigid PD}
Note that the proposed $\PDd$ is a generalization of the classical rigid (or translational) penetration depth \cite{cameron1986determining, dobkin1993computing}.
Since the object norm in Eq.~\ref{eq:objnorm} can be interpreted as an average displacement of all the points inside a deformable object during deformation, when the object purely translates by $\mathbf{d}$, the object norm will be equivalent to the squared norm of the translation vector $\mathbf{d}$ from Eq.~\ref{eq:closed}: i.e.,
\begin{equation}
\sigma = \frac{1}{10} \sum{\mathbf{d}_{i}^{T} \mathbf{d}_{i}} = {\mathbf{d}_{i}^{T}\mathbf{d}_{i}} = \Vert \mathbf{d}\Vert^2 
\end{equation}
Thus, according to Eq.~\ref{eq:d_PD_definition}, $\PDd=\mathrm{PD}$ when $\mathbf{q} \in \mathrm{SE(3)}$.




\section{Static vs Deforming Tetrahedra}\label{sec:case1}

\vspace{-1mm}
In this section, we solve the optimization problem of Eq.~\ref{eq:d_PD_definition} when $\Ta$ is static but $\Tb$ is still deformable.
Since the contact space $\mathcal{C}$ in Eq.~\ref{eq:d_PD_definition} can be realized by  face-vertex (FV), vertex-face (VF) or edge-edge (EE) contacts (Fig.~\ref{fig:contacts}),  
the normal vector of the contact plane is used as a direction to separate $\Ta, \Tb$, which can also minimize Eq.~\ref{eq:d_PD_definition}. According to the separating axis theorem \cite{gottschalk1996obbtree}, the total number of possible separating directions that we need to consider is 44 since there are 8 VF (or FV) and 36 EE contact pairs between two tetrahedra. 

\vspace{-1mm}
In general, when both $\Ta, \Tb$ are deforming, even for linear deformation, the separating directions can be non-linear, which makes Eq.~\ref{eq:d_PD_definition} hard to solve. In our work, to make the problem tractable, we assume that the separating directions are always obtained from the rest configurations $\mathbf{q}_0^{\Ta}, \mathbf{q}_0^{\Tb}$ of $\Ta, \Tb$, thus constant. Note that this is a reasonable assumption unless objects deform severely, which is reasonable for most practical robotic applications.
Let us call a vertex in $\Ta$ or $\Tb$ {\em constrained} if they are involved in calculating the separating direction (i.e., the contact normal); otherwise, call it {\em free}. For example, if a face normal of $\Ta$ (e.g., $\mathbf{n}_\mathrm{FV}$ in Fig.~\ref{fig:contacts}(a)) is selected as a separating direction, the vertices incident to the face (e.g., $\mathbf{s}_{0},\mathbf{s}_{1}, \mathbf{s}_{2}$, in Fig.~\ref{fig:contacts}(a)) are constrained. After optimization, free vertices may or may not be on the contact plane, but constrained vertices are always on the plane. 

\vspace{-1mm}
\subsection{Separating Direction Calculation}\label{sec:sepdir}
\vspace{-1mm}
Let $\{\mathbf{s}_{i} \} \subset \Ta, \{ \mathbf{r}_{i} \} \subset \Tb(\mathbf{q}_{0}^{\Tb}), \{ \mathbf{p}_{i} \} \subset \Tb(\mathbf{q}_{1}^{\Tb}) , i=0\sim3$ be the four vertices of $\Ta, \Tb(\mathbf{q}_{0}^{\Tb}), \Tb(\mathbf{q}_{1}^{\Tb})$, respectively. We first calculate the constant normal vector of a separating plane using the rest configurations. Let $\mathbf{n}$ be such a  normal vector for each contact state. Then the following is the result of normal vector calculation:
\begin{equation}\label{eq:new_normals}
\begin{aligned}
\mathbf{n}_{\mathrm{FV}}&=(\mathbf{s}_{1} - \mathbf{s}_{0})\times(\mathbf{s}_{2} - \mathbf{s}_{0})\\
\mathbf{n}_{\mathrm{VF}}&=(\mathbf{r}_{1} - \mathbf{r}_{0})\times(\mathbf{r}_{2} - \mathbf{r}_{0})\\
\mathbf{n}_{\mathrm{EE}}&=(\mathbf{r}_{1} - \mathbf{r}_{0})\times(\mathbf{s}_{1} - \mathbf{s}_{0})
\end{aligned}
\end{equation}
Furthermore, to make the separation direction consistent, the direction is decided based on the following rules:
\begin{enumerate}
\item $\mathbf{n}_{\mathrm{FV}}$ should be outward from $\Ta$.
\item $\mathbf{n}_{\mathrm{VF}}$ should be inward to $\Tb$.
\item $\mathbf{n}_{\mathrm{EE}}$ should point away from the non-contacting vertices of $\Ta$  (e.g., $\mathbf{s}_{2}, \mathbf{s}_{3}$ in Fig.~\ref{fig:contacts}(c)).
\end{enumerate}
 \begin{figure}[htb]
      \centering\subfigure[FV contact] {\includegraphics[height=2.2cm]{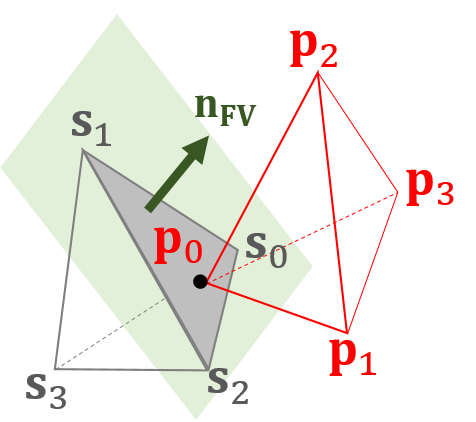}}
      \centering\subfigure[VF contact]{\includegraphics[height=2.2cm]{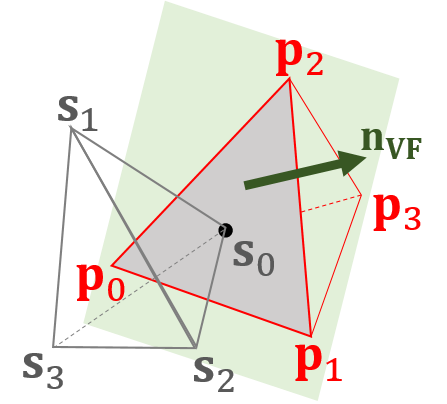}}
      \centering\subfigure[EE contact]{\includegraphics[height=2.2cm]{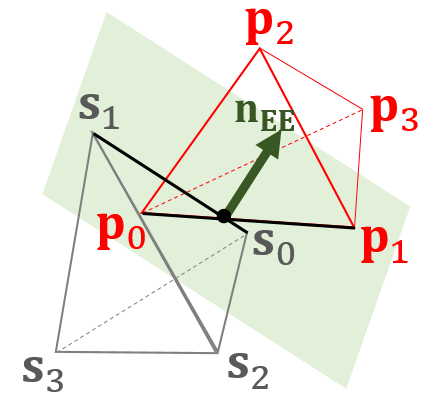}}
      \vspace{-1.5mm}
      \caption{Three contact cases (FV, VF, EE) after penetration resolution. $\{\mathbf{s}_i\} \subset \Ta$ is  static and $\{\mathbf{p}_i\} \subset \Tb$ is a deforming tetrahedron}
      \label{fig:contacts}
  \end{figure}
  
 \begin{figure}[htb]
      \centering\subfigure[Normal] {\includegraphics[height=2.2cm]{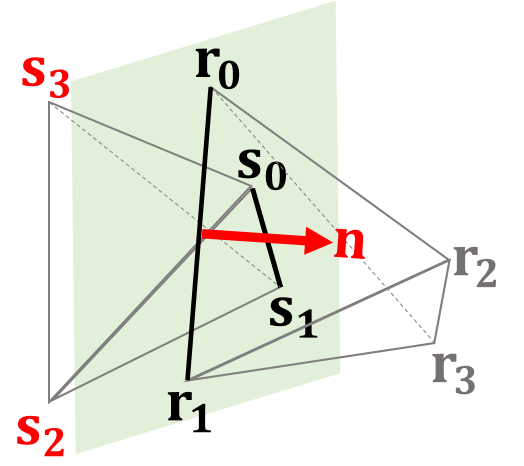}}
      \centering\subfigure[Undecidable] {\includegraphics[height=2.2cm]{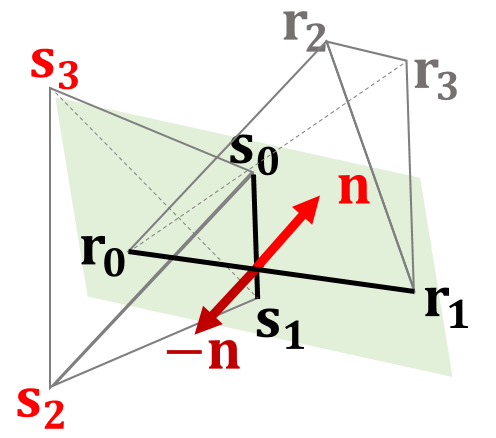}}
      \centering\subfigure[Parallel]
      {\includegraphics[height=2.2cm]{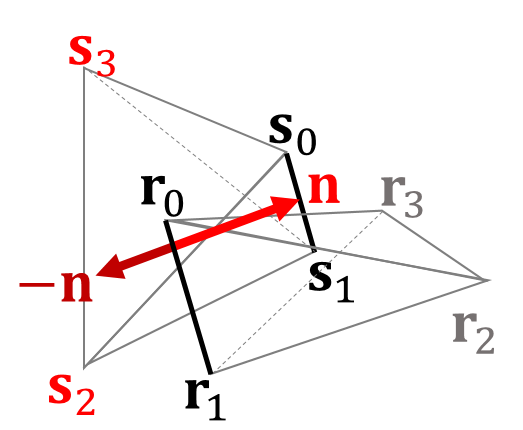}}
      \vspace{-1.5mm}
      \caption{Various cases to deal with the separating direction $\mathbf{n}$ for an EE contact. }
      \vspace{-3mm}
      \label{fig:normals}
   \end{figure}
\vspace{-1mm}
Note that the third case for $\mathbf{n}_{\mathrm{EE}}$ works only when the non-contacting vertices of $\Ta$ lie in the same half space  (Fig.~\ref{fig:normals}(a)). If this is not the case (Fig.~\ref{fig:normals}(b)), both directions are tested to minimize Eq.~\ref{eq:d_PD_definition}.  There also exist degenerate cases for $\mathbf{n}_{\mathrm{EE}}$ when two contacting EE pairs are parallel to each other. In this case, the normal is calculated from the shortest distance vector between the two EE pairs (Fig.~\ref{fig:normals}(c)). When the two EE pairs are co-linear, the separating direction can be any of the vectors perpendicular to the edge. Thus, we can use the normal directions of the faces incident to the edges as a candidate separating direction, which is redundant as it can be covered by VF or FV cases.


 \vspace{-1mm}
\subsection{Non-penetration Constraints}\label{sec:case1non}
 \vspace{-1mm}
Eq.~\ref{eq:d_PD_definition} is essentially a constrained optimization problem where the constraints are non-penetration constraints. Due to the constant assumption on the separating directions employed in Sec.~\ref{sec:sepdir}, now we can linearize the non-penetration constraints and thus set up a solvable QP problem afterwards.
Specifically, we write the non-penetration constraints for each contact case as follows: 
 \vspace{-1mm}
\begin{enumerate}
    \item FV case (Fig.~\ref{fig:contacts}(a)): since all the vertices $\mathbf{s}_i$ of the static tetrahedron $\Ta$ lie in the same half space, we simply impose non-penetration constraints for the vertices $\mathbf{p}_i$ of the deforming tetrahedron $\Tb$ as follows, where $\mathbf{s}_{0}$ is the vertex defining the FV contact:
\begin{equation}\label{eq:new_constraint1}
\begin{aligned}
\mathbf{n}_{\mathrm{FV}}\cdot(\mathbf{p}_{i}-\mathbf{s}_{0}) &\geq 0
\end{aligned}
\end{equation}
\item VF case (Fig.~\ref{fig:contacts}(b)): with constrained vertices ($\mathbf{p}_{k}, k = 0\sim2$) of $\Tb$ and free vertices ($\mathbf{s}_{j}, j = 0\sim3$) of $\Ta$, the free vertex $\mathbf{p}_{3}$ and the constrained vertex $\mathbf{p}_{0}$ on the plane, we formulate seven non-penetration constraints:
\begin{equation}\label{eq:new_constraint2}
\begin{aligned}
\mathbf{n}_{\mathrm{VF}}\cdot(\mathbf{p}_{k}-\mathbf{p}_{0}) = 0, \\
\mathbf{n}_{\mathrm{VF}}\cdot(\mathbf{s}_{j}-\mathbf{p}_{0}) \leq 0, \\
\mathbf{n}_{\mathrm{VF}}\cdot(\mathbf{p}_{3}-\mathbf{p}_{0}) \geq 0,
\end{aligned}
\end{equation}
where $k \in \{0,\dots,2\}, j \in \{0,\dots,3\}$.

\item EE case (Fig.~\ref{fig:contacts}(c)): with constrained vertices ($\mathbf{s}_{0}, \mathbf{s}_{1}, \mathbf{p}_{0}, \mathbf{p}_{1}$),  the free vertices ($\mathbf{s}_{2}, \mathbf{s}_{3}$) should lie in the same half space, and free vertices ($\mathbf{p}_{2}, \mathbf{p}_{3}$) should lie in the other half space. Without loss of generality, assuming that $\mathbf{p}_{0}$ is on the separating plane, we can formulate seven constraints.
\begin{equation}\label{eq:new_constraint3}
\begin{aligned}
\mathbf{n}_{\mathrm{EE}}\cdot(\mathbf{s}_{k}-\mathbf{p}_{0}) = 0\\
\mathbf{n}_{\mathrm{EE}}\cdot(\mathbf{p}_{1}-\mathbf{p}_{0}) = 0\\
\mathbf{n}_{\mathrm{EE}}\cdot(\mathbf{s}_{j}-\mathbf{p}_{0}) \leq 0\\
\mathbf{n}_{\mathrm{EE}}\cdot(\mathbf{p}_{j}-\mathbf{p}_{0}) \geq 0
\end{aligned}
\end{equation}
where $k \in \{0,1\}, j \in \{2,3\}$.
\end{enumerate}


\vspace{-1.5mm}
\subsection{Constrained Optimization}
 \vspace{-1mm}
Since now we set up both an objective function (Eq.~\ref{eq:objnorm}) and non-penetration constraints for each contact case (Eqs.~\ref{eq:new_constraint1},\ref{eq:new_constraint2},\ref{eq:new_constraint3}), we solve the constrained optimization in Eq.~\ref{eq:d_PD_definition}. Since the objective function is quadratic and the constraints are linear, our problem is a QP problem. This problem is solvable as the objective function is semi-positive definite and the constraint space is convex.
We use an off-the-shelf QP solver like Gurobi Optimizer \cite{gurobi} to effectively solve this problem.


\section{Deforming vs Deforming Tetrahedra}\label{sec:case2}
 \vspace{-1mm}
In this section, we extend the optimization problem of Eq.~\ref{eq:d_PD_definition} to the case where both $\Ta$ and $\Tb$ are deforming. 
%
 \vspace{-1mm}
\subsection{Non-penetration Constraints}
 \vspace{-1mm}
The non-penetration constraints used for this section requires a new variable ${\mathbf{p}_{s}} \in \mathbb{R}^{3}$, which is a point on the separating plane to decide the position of the plane. For the static and deforming tetrahedron case, the position of a separating plane is automatically decided once we determine the separating direction since there will be at least one vertex of the static tetrahedron involved in contact. But now, not only the configurations of both tetrahedra but also the position of the separating plane should be optimized at the same time because there is no static vertex to decide the position.
By introducing ${\mathbf{p}_{s}}$, it becomes easier to write the constraints since we do not have to consider which vertex to select as a point on the plane. The only difference between each contact case is the separating direction and a set of constrained vertices, $\mathcal{C}_s$ for $\Ta$ and $\mathcal{C}_p$ for $\Tb$; for instance, $\mathcal{C}_s=\{\mathbf{s}_{0},\mathbf{s}_{1},\mathbf{s}_{2}\}, \mathcal{C}_p=\emptyset$ in Fig.~\ref{fig:constraints}(a) and $\mathcal{C}_s=\{\mathbf{s}_{0},\mathbf{s}_{1} \}, \mathcal{C}_p=\{\mathbf{p}_{0},\mathbf{p}_{1}\} $ in Fig.~\ref{fig:constraints}(b). 
Once they are decided,  we can write the non-penetration constraints in a more general form than those presented in Sec.~\ref{sec:case1}. With a chosen separating direction $\mathbf{n}$ for each contact case, 
the corresponding constrained vertices $\mathbf{s}_{{c}} \in \Ta$, $\mathbf{p}_{{c}} \in \Tb$ and the free vertices $\mathbf{s}_{{f}} \in \Ta$, $\mathbf{p}_{{f}} \in \Tb$ of each deforming tetrahedron $\Ta, \Tb$,
the eight non-penetration constraints can be formulated as:
\begin{equation}\label{eq:constraints}
\begin{aligned}
\mathbf{n}\cdot(\mathbf{s}_{{c}}-\mathbf{p}_{{s}}) &= 0, ~\forall \mathbf{s}_{c} \in \mathcal{C}_{s},\\
\mathbf{n}\cdot(\mathbf{p}_{{c}}-\mathbf{p}_{{s}}) &= 0, ~\forall \mathbf{p}_{c} \in \mathcal{C}_{p},\\
\mathbf{n}\cdot(\mathbf{s}_{{f}}-\mathbf{p}_{{s}}) &\leq 0, ~\forall \mathbf{s}_{f} \in \mathcal{F}_{s},\\
\mathbf{n}\cdot(\mathbf{p}_{{f}}-\mathbf{p}_{{s}}) &\geq 0, ~\forall \mathbf{p}_{f} \in \mathcal{F}_{p},\\
\end{aligned}
\end{equation}
where $\mathcal{F}_{s}, \mathcal{F}_{p}$ are the set of free vertices for $\Ta, \Tb$, respectively.

 \vspace{-1mm}
The non-penetration constraints for the static/deforming tetrahedron in Sec.~\ref{sec:case1} can be viewed as a special case of this form where ${\mathbf{s}_{{c}}}$ and ${\mathbf{s}_{f}}$ are fixed and ${\mathbf{p}_{s}}$ is chosen among constrained vertices. 
 For example, in FV case, since $\mathcal{C}_{p} = \emptyset$, the second constraint in Eq.~\ref{eq:constraints} can be ignored. Moreover, as ${\mathbf{p}_{s}}$ is chosen from ${\mathcal{C}_{s}}$, the first and third constraints in Eq.~\ref{eq:constraints} are automatically satisfied and the rest of four constraints remain just like in Eq.~\ref{eq:new_constraint1}. The VF case (Eq.~\ref{eq:new_constraint2}) and EE case (Eq.~\ref{eq:new_constraint3}) have seven constraints because the chosen constrained vertex removes one of the equality constraints; for example, since $\mathbf{p}_{s}=\mathbf{p}_{0}\in \mathcal{C}_{p}$,  $\mathbf{n}\cdot(\mathbf{p}_{0}- \mathbf{p}_{0})=0$.
%

\begin{figure}[htb]
      \centering\subfigure[FV or VF case] {\includegraphics[height=2.5cm]{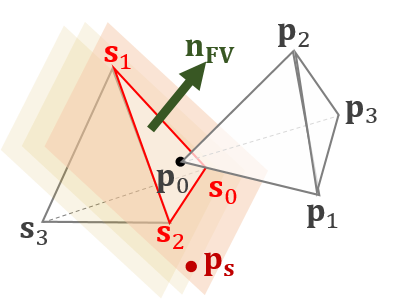}}
      \centering\subfigure[EE case]{
      \includegraphics[height=2.5cm]{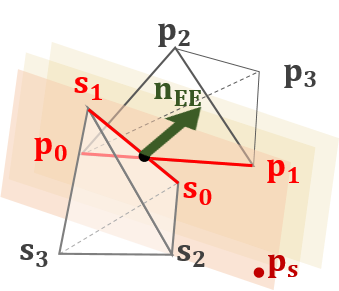}}
      \vspace{-1.5mm}
      \caption{A new variable ${\mathbf{p}_{s}}$ constrains the location of the separating plane.  Constrained vertices (red) should lie on the separating plane and free vertices (black) should lie inside either side of  the plane.}
      \vspace{-3mm}
      \label{fig:constraints}
 \end{figure}
 
\subsection{Constrained Optimization}

\vspace{-1mm}
Similarly to the previous section, $\PDd$ can be calculated by solving the QP problem of Eq.~\ref{eq:d_PD_definition}.
Note that the non-penetration constraints in Eq.~\ref{eq:constraints} is still linear since the separating directions are constant.

\vspace{-1mm}
\section{Acceleration Technique}\label{sec:case3}
\vspace{-1mm}
In order to compute the $\PDd$, we have to solve a QP problem with constraints, each of which is associated with a separation direction, typically the entire solution taking tens of milliseconds using an off-the-shelf QP solver. 
However, if we can pre-select a set of candidate directions, possibly containing the optimal separating direction, we can significantly reduce the overall computation time. Under the hypothesis that the deformation of soft objects is not severe, we assume that the penetration (equivalently separating) direction for deformable objects is close to the penetration direction when the objects behave rigidly. 
Moreover, in our experiment, we have found that, on average, 52.33\% of the result obtained by rigid penetration depth coincides with the optimal direction after running full optimization on $\PDd$.  

\vspace{-1mm}
To leverage this observation and accelerate $\PDd$ computation, we calculate the rigid penetration depth PD based on the separating axis theorem (SAT) \cite{gottschalk1996obbtree} and feed it to the optimization problem in Eq.~\ref{eq:d_PD_definition}; this is equivalent to using a single contact constraint in Eq.~\ref{eq:d_PD_definition}.
There are various methods to find rigid penetration depth such as using GJK algorithm or Minkowski sums \cite{cameron1986determining, dobkin1993computing}. However, we choose to use a SAT-based algorithm since our $\PDd$ algorithm can be considered as a general case of PD and also can be implemented similarly to \cite{lee2010simple}. 
Specifically, in order to compute PD, we use the overlap length of tetrahedra projected to each separating direction and take the projected, minimum overlap length as PD.
A little caution has to be taken in our case, however, since tetrahedra are not symmetric. Thus, we need to project all tetrahedral vertices to 44 separating direction to calculate the minimum overlapping length.

\vspace{-1mm}
Let $L_\mathbf{n}(\mathcal{T})=  \max ( \{  |(\mathbf{x}_{i}-\mathbf{x}_{j})\cdot\mathbf{n}| ~ \vert ~ \forall \mathbf{x}_{i},  \mathbf{x}_{j}\in\mathcal{T}\})$ for a tetrahedron $\mathcal{T}$, vertices $\mathbf{x}_i, \mathbf{x}_j$ in $\mathcal{T}$, and $\mathbf{n}$ is a given separating direction. The PD can be calculated as: 
\begin{equation}\label{eq:PD}
\mathrm{PD}(\Ta,\Tb)\! =\! \min(\{(L_\mathbf{n}(\Ta)\!+\!L_\mathbf{n}(\Tb)\! -\!L_\mathbf{n}(\Ta\cup\Tb)~\vert 
\forall\mathbf{n}\!\in\! N\}),
\end{equation}

\vspace{-1.5mm}
where $N=\{\mathbf{n}_\mathrm{FV},\mathbf{n}_\mathrm{VF},\mathbf{n}_\mathrm{EE}\}$ is a set of possible separating directions between $\Ta, \Tb$ (see the appendix for proof).
Interestingly, instead of taking the "minimum direction" in Eq.~\ref{eq:PD}, even if we take only the first six directions in ascending order of minimum distances, more than 90\% of $\PDd$ direction from Sec.~\ref{sec:case2} can be still found. This observation opens up a new possibility of reducing the possible error from this approximation while spending a little more time on search.

\section{RESULTS AND DISCUSSION}\label{sec:results}

Now we show implementation results of our penetration depth algorithms for static/deformable, deformable/deformable, and deformable/deformable with acceleration. 
For our experiments, 
the target tetrahedra are randomly generated, which are bounded by a cube with a side length of 10. We tested $10^4$ intersecting pairs of tetrahedra with different volume sizes ranging from 0.1 to 130, and compute their $\PDd$'s.
We implemented our algorithms using C++ on a Windows 10 PC with an AMD Rizen7 1700x 3.6GHz CPU, and 32GB memory. We used the Gurobi Optimizer to solve the QP problems.
\vspace{-1.5mm}
\subsection{Performance}
\vspace{-1.5mm}
As illustrated in Fig.~\ref{fig:implementation}, every result using $\PDd$ resolves penetration. The average performance results of each case are shown in Table.~\ref{tab:performance}. The running time for the static/deformable case is slightly faster than the deformable/deformable case since a fewer number of variables are required to optimize. 
The accelerated deformable/deformable case can be calculated in 1.07 milliseconds on average.
\vspace{-1mm}
    \begin{table}[h]
    \caption{Performance Statistics}
     \vspace{-3mm}
    \label{tab:performance}
    \begin{center}
    \begin{tabular}{|c||c|c|c|}
    \hline
    Criterion & STAT/DEF & DEF/DEF & DEF/DEF(ACCEL)\\
    \hline
    \hline
    {Performance} & {32.26\it{ms}} & {46.29 \it{ms}} & {1.07 \it{ms}}\\
    \hline
    $\PDd/\mathrm{PD}$ &{43.59 \it{\%}} &{27.94 \it{\%}}& {29.39\it{\%}}\\
    (Standard Deviation) & (13.17\it{\%}) & (3.15\it{\%}) & (3.47\it{\%})\\
    \hline
    \end{tabular}
    \end{center}
    \vspace{-6mm}
 \end{table}
 \begin{figure}[htb]
      \centering\subfigure[Penetration State] {\includegraphics[height=2.3cm]{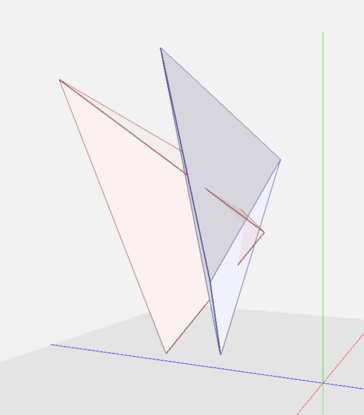}}
      \subfigure[Static / Deformable]
      {\includegraphics[height=2.3cm]{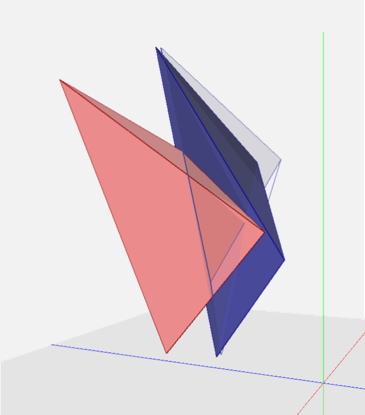}}
      \subfigure[Deformable / Deformable]
      {\includegraphics[height=2.3cm]{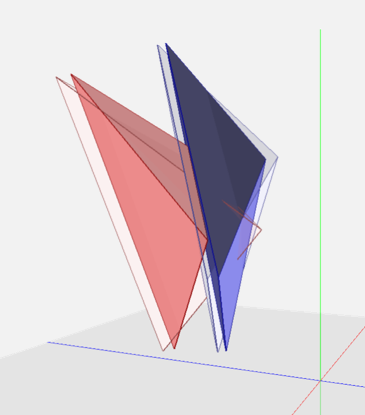}}
       \subfigure[Rigid PD]
       {\includegraphics[height=2.3cm]{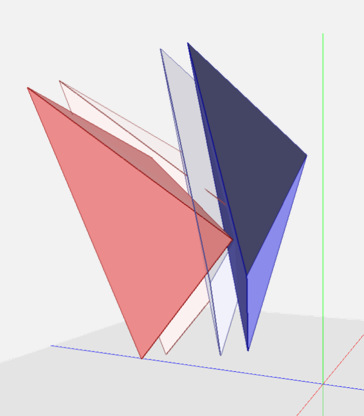}}
       \vspace{-1.5mm}
      \caption{Implementation results of deformable penetration depth and rigid penetration depth. (a) shows the initial penetrated state of two tetrahedra. In (b)$\sim$(d), the solid colored tetrahedra are the results with $\PDd = 0.5692, 0.3469, \mathrm{PD} = 1.496$, respectively.}
      \label{fig:implementation}
  \end{figure}
  \begin{figure}[htb]
  \vspace{-3mm}
      \centering
      \includegraphics[height=3.3cm]{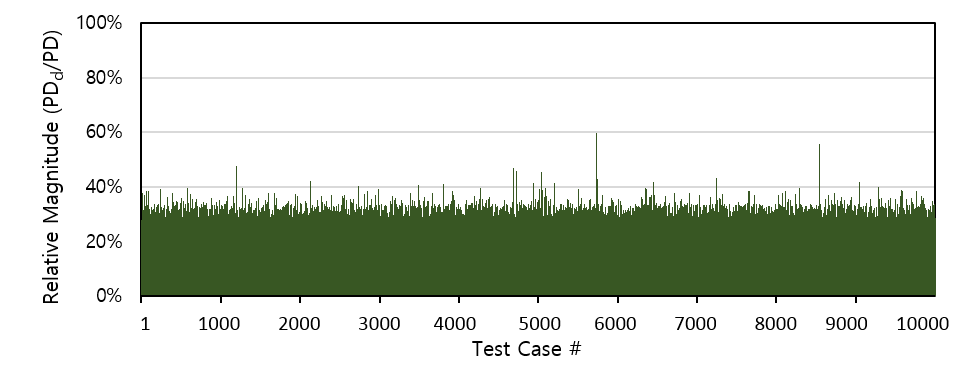}
      \vspace{-8mm}
      \caption{Relative magnitude of $\PDd$ over rigid penetration depth $\mathrm{PD}$ in $10^4$ times of penetration resolution tests of randomly intersecting tetrahedron pairs. The average magnitude is 27.94\% with the standard deviation of 3.15\%.}
      \label{fig:Relative_Manitude}
        \vspace{-6mm}
  \end{figure}
\subsection{Discussion}
\vspace{-1.5mm}
To the best of our knowledge, there are no penetration depth algorithms for deformable models available to guarantee full separation as a result (i.e. penetration metric with a tight upper bound). All the known penetration depth algorithms for deformable models provide only a lower bound, which does not guarantee full penetration resolution. Thus, it is unfair to compare our algorithms against existing algorithms for deformables and instead we compare our algorithms with rigid PD, which can be considered as an upper bound for deformable penetration depth.
In this case, in order to show the tightness of the metric upper bounds, we calculate the relative magnitude of $\PDd$ with respect to rigid penetration depth PD. 
Fig.~\ref{fig:Relative_Manitude} shows that the magnitude of $\PDd$ is less than $\frac{1}{3}$ of that of the rigid case most times. 
This means that $\PDd$ provides a much tighter deformation metric than using the rigid penetration depth. 

\section{CONCLUSION}\label{sec:conc}

 \vspace{-1mm}
We have formulated a new penetration metric based on object norm for a pair of intersecting tetrahedra undergoing linear deformation. The new metric, called deformable PD $\PDd$, optimizes an average displacement of all the points inside the tetrahedra under linear deformation to separate them. The deformable PD can be computed by solving a QP problem based on the distance metric with non-penetration constraints. 

\vspace{-1.5mm}
We have implemented three cases of computing $\PDd$, rigid vs. deformable, deformable vs. deformable with and without acceleration. Our experimental results show that we can compute $\PDd$ in a fraction of milliseconds for intersecting, deformable tetrahedra.

 \vspace{-1.5mm}
There are a few limitations to our algorithm. To derive a tractable optimization problem for $\PDd$, we have assumed that the separation direction can be obtained from the rest pose of deforming tetrahedra. Even though it is straightforwardly extendable to a set of tetrahedra by applying our metric to each element in the set, it might be interesting to pursue a technique to accelerate this computation, to be more useful for FEM-type simulation. One plausible direction would be to combine the iterative contact space projection technique \cite{PolyDepth} with our deformable metric.
 Our metric does not guarantee volume-preservation during deformation, which is another interesting future work.

\section*{Acknowledgements}
This project was supported by the National Research Foundation (NRF) in South Korea (2017R1A2B3012701).







\bibliographystyle{IEEEtran}

\bibliography{pd}

\appendix

We prove that the penetration depth (PD) between two intersected tetrahedra can be calculated using the separating axis theorem (SAT), which was claimed without proof in Eq. 11. The main idea of the proof is that the PD is equal to the smallest overlapping length projected over all possible separating axes in the SAT.
\begin{definition}
 The projected length $L_\mathbf{n}(\mathcal{T})$ of a given simplicial complex $\mathcal{T}$ along an axial direction $\mathbf{n}$ is defined as:
 \begin{equation}\label{eq:projection}
    L_\mathbf{n}(\mathcal{T})=  \max ( \{  |(\mathbf{x}_{i}-\mathbf{x}_{j})\cdot\mathbf{n}| ~ \vert ~ \forall \mathbf{x}_{i},  \mathbf{x}_{j}\in\mathcal{T}\})
 \end{equation}
 where $\mathbf{x}_i, \mathbf{x}_j$ are the vertices in $\mathcal{T}$. Then, the pair of vertices $\mathbf{x}_i, \mathbf{x}_j$ that realizes the projected length $L_\mathbf{n}(\mathcal{T})$ is called supporting vertices for $\mathbf{n}$.
\end{definition}

\begin{theorem}\label{thm:pd}
The penetration depth (PD) of two intersected tetrahedra $\Ta, \Tb$ can be calculated as:
\begin{equation}\label{eq:PD}
\mathrm{PD}(\Ta,\Tb) = \min\{(L_\mathbf{n}(\Ta)+L_\mathbf{n}(\Tb) -L_\mathbf{n}(\Ta\cup\Tb))~\vert 
\forall\mathbf{n}\in N\},
\end{equation}
where $N=\{\mathbf{n}_\mathrm{FV},\mathbf{n}_\mathrm{VF},\mathbf{n}_\mathrm{EE}\}$ is a set of possible separating directions for $\Ta, \Tb$.

\end{theorem}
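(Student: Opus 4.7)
The plan is to prove Theorem~\ref{thm:pd} in two stages: first characterize $\mathrm{PD}$ as the infimum over all unit directions $\mathbf{n}\in S^2$ of the projected overlap width, and then reduce this infimum to the finite candidate set $N$ via the combinatorial structure of the Minkowski sum of two convex polytopes.

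For the first stage, I would fix a unit direction $\mathbf{n}$ and observe that, for each tetrahedron $\mathcal{T}$, the projection onto the line spanned by $\mathbf{n}$ is the closed interval $[\min_{\mathbf{x}\in\mathcal{T}}\mathbf{x}\cdot\mathbf{n},\max_{\mathbf{x}\in\mathcal{T}}\mathbf{x}\cdot\mathbf{n}]$ of length $L_{\mathbf{n}}(\mathcal{T})$. When $\Ta\cap\Tb\neq\emptyset$, the projected intervals of $\Ta$ and $\Tb$ overlap, and an elementary length calculation yields that the overlap length along $\mathbf{n}$ equals $o(\mathbf{n})=L_{\mathbf{n}}(\Ta)+L_{\mathbf{n}}(\Tb)-L_{\mathbf{n}}(\Ta\cup\Tb)$. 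I would then show that translating $\Tb$ by $t\mathbf{n}$ separates $\Ta$ from $\Tb$ if and only if $t\ge o(\mathbf{n})$, which, by the definition of penetration depth as the minimum-norm translation producing separation, gives $\mathrm{PD}(\Ta,\Tb)=\inf_{\mathbf{n}\in S^2}o(\mathbf{n})$. Continuity of $o(\cdot)$ and compactness of $S^2$ guarantee the infimum is attained.

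For the second stage, I would pass to the Minkowski sum view: $\Ta$ and $\Tb$ are disjoint after translation by $\mathbf{t}$ iff $\mathbf{t}\notin\mathrm{int}(\Tb\ominus\Ta)=\mathrm{int}(\Tb\oplus(-\Ta))$, so $\mathrm{PD}$ equals the distance from the origin (which lies in the interior since $\Ta\cap\Tb\neq\emptyset$) to $\partial(\Tb\oplus(-\Ta))$. This minimum is realized at a point lying in the relative interior of some facet $F$ of the convex polytope $\Tb\oplus(-\Ta)$, and the optimal separating direction is the outward unit normal $\mathbf{n}_F$. A standard fact about Minkowski sums of convex polytopes classifies every facet $F$ as arising from either (i) a facet of $\Tb$ and a vertex of $-\Ta$, giving $\mathbf{n}_F=\mathbf{n}_{\mathrm{VF}}$, (ii) a vertex of $\Tb$ and a facet of $-\Ta$, giving $\mathbf{n}_F=\mathbf{n}_{\mathrm{FV}}$, or (iii) an edge of $\Tb$ and an edge of $-\Ta$, giving $\mathbf{n}_F\parallel\mathbf{n}_{\mathrm{EE}}$. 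Hence the optimal $\mathbf{n}$ lies in $N$ (up to sign, which is handled by the absolute value in $L_{\mathbf{n}}$), and the minimum in Eq.~\ref{eq:PD} is achieved.

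The main obstacle I anticipate is the facet-classification step in the Minkowski-sum argument: giving a clean, self-contained justification that every facet normal of $\Tb\oplus(-\Ta)$ falls into one of the three combinatorial types, and handling the degenerate edge-edge cases (parallel or coplanar edges) carefully so that no additional candidate directions are needed beyond those enumerated in $N$. The edge-edge degeneracies can be absorbed into the FV/VF cases, as already noted in Sec.~\ref{sec:sepdir}, which makes the claim that $N$ suffices rigorous even in non-generic configurations.
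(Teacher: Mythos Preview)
Your two-stage plan via Minkowski sums is genuinely different from the paper's argument. The paper works directly with the SAT: it first argues that translating by $\varepsilon\mathbf{m}$ (the minimizer of Eq.~\ref{eq:PD}) separates the pair by computing $L_{\mathbf{m}}(\Ta'\cup\Tb')=L_{\mathbf{m}}(\Ta\cup\Tb)+\varepsilon=L_{\mathbf{m}}(\Ta')+L_{\mathbf{m}}(\Tb')$, and then establishes minimality by taking an arbitrary separating translation $\tilde{\varepsilon}\tilde{\mathbf{m}}$, invoking SAT to obtain a witness axis $\tilde{\mathbf{n}}\in N$, and bounding $\tilde{\varepsilon}\ge|\tilde{\varepsilon}\,\tilde{\mathbf{m}}\cdot\tilde{\mathbf{n}}|\ge L_{\tilde{\mathbf{n}}}(\Ta)+L_{\tilde{\mathbf{n}}}(\Tb)-L_{\tilde{\mathbf{n}}}(\Ta\cup\Tb)\ge\varepsilon$ through a case split on whether the supporting vertices of $\Ta\cup\Tb$ change under the translation. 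Your route instead identifies $\mathrm{PD}$ with the distance from the origin to $\partial(\Tb\oplus(-\Ta))$ and classifies facet normals combinatorially, which is more structural and cleanly explains \emph{why} only the $44$ directions in $N$ need be tested.

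However, your first stage has a genuine gap. The claim ``translating $\Tb$ by $t\mathbf{n}$ separates $\Ta$ from $\Tb$ if and only if $t\ge o(\mathbf{n})$'' fails in both directions. The ``only if'' is wrong because separation may be witnessed along a different axis for smaller $t$. More damagingly, the ``if'' direction fails whenever one projected interval is strictly contained in the other: then $o(\mathbf{n})=L_{\mathbf{n}}(\Ta)+L_{\mathbf{n}}(\Tb)-L_{\mathbf{n}}(\Ta\cup\Tb)$ equals the length of the \emph{shorter} interval, and shifting by that amount does not disjoin the projections. Concretely, if $\Ta$ is a tiny tetrahedron lying deep inside a large $\Tb$, then $L_{\mathbf{n}}(\Ta\cup\Tb)=L_{\mathbf{n}}(\Tb)$ for every $\mathbf{n}$, so $\min_{\mathbf{n}\in N}o(\mathbf{n})\le\mathrm{diam}(\Ta)$, whereas $\mathrm{PD}$ is at least the distance from $\Ta$ to $\partial\Tb$; these can differ by any factor. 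Your Minkowski-sum stage actually yields $\mathrm{PD}$ as the minimum of the support function $h_{\Tb\oplus(-\Ta)}(\mathbf{n})=\max_{\Tb}\mathbf{x}\!\cdot\!\mathbf{n}-\min_{\Ta}\mathbf{x}\!\cdot\!\mathbf{n}$ over facet normals, and this coincides with $o(\mathbf{n})$ only when neither projection contains the other, so your two stages do not connect in general. The paper's own proof shares this hidden assumption: its step $L_{\mathbf{m}}(\Ta'\cup\Tb')=L_{\mathbf{m}}(\Ta\cup\Tb)+\varepsilon$ tacitly requires that the two supporting vertices of $\Ta\cup\Tb$ along $\mathbf{m}$ come from different tetrahedra. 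Both arguments are therefore valid only under a non-containment hypothesis on the projections along the directions in $N$, which is reasonable for the shallow penetrations the paper targets but is not part of the stated theorem.
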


\begin{proof}
According to the separating axis theorem (SAT) \cite{gottschalk1996obbtree}, two convex objects $\Ta, \Tb$ do not overlap iff there exists a separating axis $\mathbf{n}$ that the axial projection of $\Ta, \Tb$ onto $\mathbf{n}$ does not overlap.
%
The SAT can be rewritten using Eq.~\ref{eq:projection} as:
\begin{equation}\label{eq:SAT}
\exists \mathbf{n} \in N, \;  L_\mathbf{n}(\Ta \cup \Tb) \geq L_\mathbf{n}(\Ta)+ L_\mathbf{n}(\Tb).
\end{equation}
%
Thus, if two objects are interpenetrated, $ L_\mathbf{n}(\Ta)+ L_\mathbf{n}(\Tb) - L_\mathbf{n}(\Ta \cup \Tb) > 0 $ for $\forall \mathbf{n}$. Let $\varepsilon$ be the result of evaluating Eq.~\ref{eq:PD} and $\mathbf{m}$ be the corresponding separating direction: i.e.,
\begin{equation}\label{eq:separatingDirection}
 \mathbf{m} = \argmin_{\mathbf{n}}\{(L_\mathbf{n}(\Ta)+L_\mathbf{n}(\Tb) -L_\mathbf{n}(\Ta\cup\Tb))~\vert 
\forall\mathbf{n}\in N\}
\end{equation}
Then, we can prove Theorem \ref{thm:pd} by showing that:
\begin{enumerate}
    \item $\varepsilon\mathbf{m}$ separates $\Ta$ and $\Tb$ by translation.
    \item $\varepsilon$ is the smallest magnitude among such translations.
\end{enumerate}

\begin{figure}[htb]
      \centering\subfigure[Interpenetrated state] {\includegraphics[height=3cm]{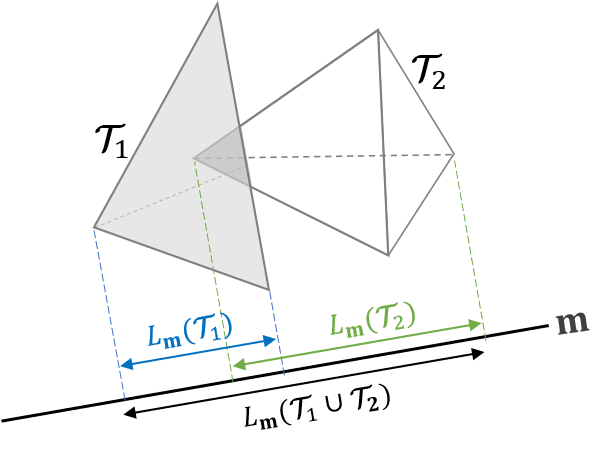}}
       \centering\subfigure[Translated state] {\includegraphics[height=3cm]{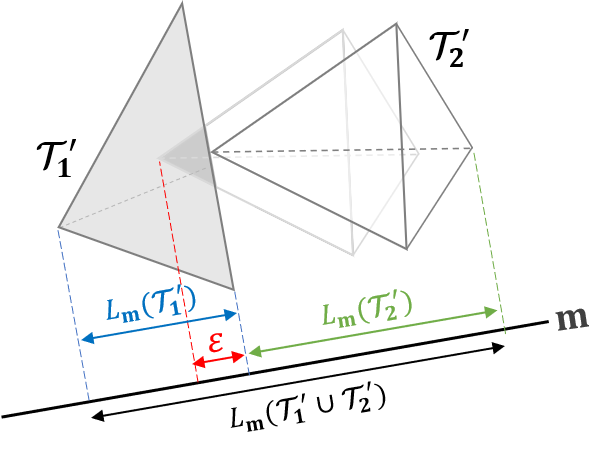}}
      \caption{Example of two tetrahedra projected on the axis $\mathbf{m}$ before and after translation $\varepsilon \mathbf{m}$, where $\mathbf{m}$ is the face normal of $\Ta$. The projected length of each tetrahedron does not change during translation. }
      \label{fig:projection}
  \end{figure}

Let $\Ta', \Tb'$ be the tetrahedra of $\Ta, \Tb$ translated by $\varepsilon\mathbf{m}$. Since these tetrahedra are not rotated, the projected length of each tetrahedron on axis $\mathbf{m}$ is the same as before translation:
\begin{equation}
\begin{split}
L_\mathbf{m}(\Ta)= L_\mathbf{m}(\Ta')\\
L_\mathbf{m}(\Tb)= L_\mathbf{m}(\Tb')
\end{split}
\end{equation}

Since the direction of the translation is the same as the projection axis, the length of the translation vector is  $\vert\varepsilon\mathbf{m}\cdot\mathbf{m}\vert=\varepsilon$. Then, the entire projected length of the translated tetrahedra is:
\begin{equation}
\begin{split}
& L_\mathbf{m}(\Ta' \cup \Tb')\\
& = L_\mathbf{m}(\Ta \cup \Tb) + \varepsilon\\ 
& = L_\mathbf{m}(\Ta \cup \Tb) + L_\mathbf{m}(\Ta)+ L_\mathbf{m}(\Tb) - L_\mathbf{m}(\Ta \cup \Tb)\\
& = L_\mathbf{m}(\Ta)+ L_\mathbf{m}(\Tb)\\
& = L_\mathbf{m}(\Ta')+ L_\mathbf{m}(\Tb'),
\end{split}
\end{equation}
implying that the two tetrahedra translated by $\varepsilon\mathbf{m}$ do not overlap because of the SAT (Fig.~\ref{fig:projection}(b)). 

Let $\tilde{\varepsilon}\tilde{\mathbf{m}}$ be an arbitrary translation that separates $\Ta, \Tb$ and $\tilde{\Ta},\tilde{\Tb}$ be the translated copies of the tetrahedra. Since $\tilde{\Ta},\tilde{\Tb}$ are separated, according to the SAT, there exists a separating axis $\tilde{\mathbf{n}}$ that satisfies $L_\mathbf{\tilde{n}}(\tilde{\Ta} \cup \tilde{\Tb}) \geq L_\mathbf{\tilde{n}}(\tilde{\Ta})+ L_\mathbf{\tilde{n}}(\tilde{\Tb})$. 
For the given set $\Ta \cup \Tb$ and the direction $\tilde{\mathbf{n}}$, let $\mathbf{x}_1$ and $\mathbf{x}_2$ be the two supporting vertices used to calculate the projection $L_\mathbf{\tilde{\mathbf{n}}}(\Ta \cup \Tb) = |(\mathbf{x}_2-\mathbf{x}_1)\cdot\tilde{\mathbf{n}}|$. 

Now, suppose that these two vertices can support the tetrahedra even after being translated by $\tilde{\varepsilon}\mathbf{\tilde{m}}$ (Fig.~\ref{fig:arbitraryTranslation}(a)), and let $\mathbf{\tilde{x}_1}$ and $\mathbf{\tilde{x}_2}$ be the corresponding vertices after translation. Then the displacement between the two vertices after translation can be calculated as: $\mathbf{\tilde{x}_2}-\mathbf{\tilde{x}_1}=\mathbf{x}_2-\mathbf{x}_1+\varepsilon\mathbf{\tilde{m}}$. 
The projected length is  $L_\mathbf{\tilde{\mathbf{n}}}(\tilde{\Ta} \cup \tilde{\Tb}) 
=|(\mathbf{\tilde{x}_2}-\mathbf{\tilde{x}_1})\cdot\tilde{\mathbf{n}}|= |(\mathbf{x}_2-\mathbf{x}_1+\varepsilon\mathbf{\tilde{m}})\cdot\tilde{\mathbf{n}}|$. 
Then,
\begin{equation}
\begin{split}
& L_\mathbf{\tilde{n}}(\Ta \cup \Tb) + \vert\tilde{\varepsilon}\mathbf{\tilde{m}}\cdot\mathbf{\tilde{n}}\vert\\
&=|(\mathbf{x}_2-\mathbf{x}_1)\cdot\tilde{\mathbf{n}}|+|\varepsilon\mathbf{\tilde{m}}\cdot\tilde{\mathbf{n}}|\\
&\geq |(\mathbf{x}_2-\mathbf{x}_1)\cdot\tilde{\mathbf{n}}+\varepsilon\mathbf{\tilde{m}}\cdot\tilde{\mathbf{n}}| 
= L_\mathbf{\tilde{n}}(\tilde{\Ta} \cup \tilde{\Tb})\\
&\geq L_\mathbf{\tilde{n}}(\tilde{\Ta})+ L_\mathbf{\tilde{n}}(\tilde{\Tb}) = L_\mathbf{\tilde{n}}(\Ta)+ L_\mathbf{\tilde{n}}(\Tb)
\end{split}
\end{equation}

\begin{figure}[htb]
      \subfigure[Supporting vertices are same] {\includegraphics[height=3cm]{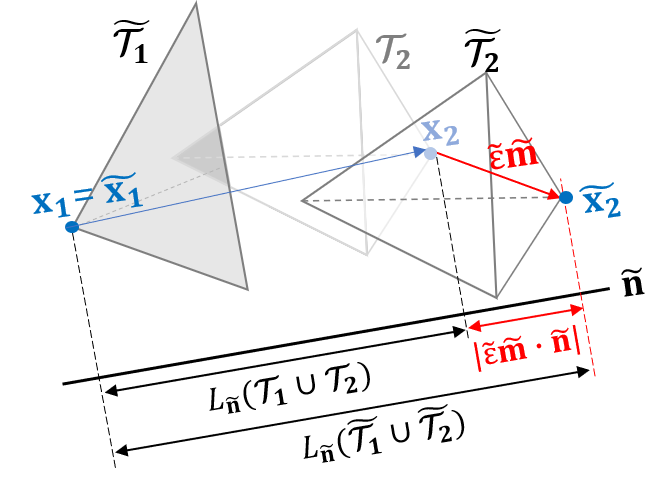}}
       \subfigure[Supporting vertices are changed] {\includegraphics[height=3cm]{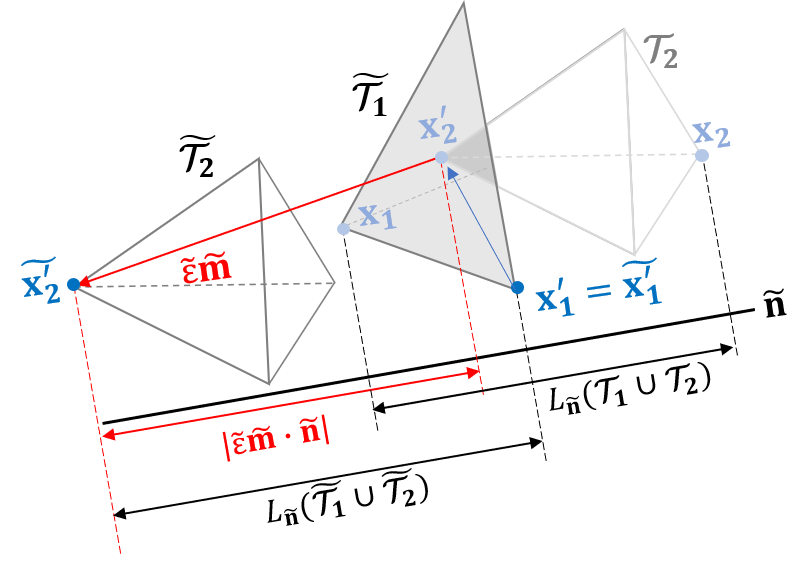}}
      \caption{Examples of an arbitrary translation $\tilde{\varepsilon}\tilde{\mathbf{m}}$ that separates two tetrahedra and their projection on the separating axis $\mathbf{\tilde{n}}$. (a)(b) shows two cases of projection results according to the changes in supporting vertices before and after translation. }
      \label{fig:arbitraryTranslation}
  \end{figure}
  
Otherwise, the supporting vertices are changed after translation (Fig.~\ref{fig:arbitraryTranslation}(b)). Let $\tilde{\mathbf{x}_1}'$, $\tilde{\mathbf{x}_2}'$ be the supporting vertices after translation and $\mathbf{x}_1'$, $\mathbf{x}_2'$ be the corresponding vertices before translation. Then, $L_\mathbf{\tilde{\mathbf{n}}}(\tilde{\Ta} \cup \tilde{\Tb}) 
=|(\mathbf{\tilde{x}_2}'-\mathbf{\tilde{x}_1}')\cdot\tilde{\mathbf{n}}|= |(\mathbf{x}_2'-\mathbf{x}_1'+\varepsilon\mathbf{\tilde{m}})\cdot\tilde{\mathbf{n}}|$. Since $\mathbf{x}_1'$ and $\mathbf{x}_2'$ were not supporting vertices before translation, according to Eq.~\ref{eq:projection}, the projected length of  $\mathbf{x}_1'$ and $\mathbf{x}_2'$ must be smaller than that of supporting vertices $\mathbf{x}_1$, $\mathbf{x}_2$: i.e.,
$L_\mathbf{\tilde{n}}(\Ta \cup \Tb)=|(\mathbf{x}_2-\mathbf{x}_1)\cdot\tilde{\mathbf{n}}|\geq|(\mathbf{x}_2'-\mathbf{x}_1')\cdot\tilde{\mathbf{n}}|$. Thus,
\begin{equation}
\begin{split}
& L_\mathbf{\tilde{n}}(\Ta \cup \Tb) + \vert\tilde{\varepsilon}\mathbf{\tilde{m}}\cdot\mathbf{\tilde{n}}\vert\\
&=|(\mathbf{x}_2-\mathbf{x}_1)\cdot\tilde{\mathbf{n}}|+|\varepsilon\mathbf{\tilde{m}}\cdot\tilde{\mathbf{n}}|\\
&\geq |(\mathbf{x}_2'-\mathbf{x}_1')\cdot\tilde{\mathbf{n}}|+|\varepsilon\mathbf{\tilde{m}}\cdot\tilde{\mathbf{n}}|\\
&\geq |(\mathbf{x}_2'-\mathbf{x}_1')\cdot\tilde{\mathbf{n}}+\varepsilon\mathbf{\tilde{m}}\cdot\tilde{\mathbf{n}}| = L_\mathbf{\tilde{n}}(\tilde{\Ta} \cup \tilde{\Tb})\\
&\geq L_\mathbf{\tilde{n}}(\tilde{\Ta})+ L_\mathbf{\tilde{n}}(\tilde{\Tb}) = L_\mathbf{\tilde{n}}(\Ta)+ L_\mathbf{\tilde{n}}(\Tb)
\end{split}
\end{equation}

In either case, we can see that $\vert\tilde{\varepsilon}\mathbf{\tilde{m}}\cdot\mathbf{\tilde{n}}\vert \ge  L_\mathbf{\tilde{n}}(\Ta)+ L_\mathbf{\tilde{n}}(\Tb) - L_\mathbf{\tilde{n}}(\Ta \cup \Tb)$.
Since $|\tilde{m}| =|\tilde{n}| = 1$, 
\begin{equation}
\begin{split}
\tilde{\varepsilon} &\geq \tilde{\varepsilon}\vert\mathbf{\tilde{m}}\cdot\mathbf{\tilde{n}}\vert \\
&\geq L_\mathbf{\tilde{n}}(\Ta)+ L_\mathbf{\tilde{n}}(\Tb) - L_\mathbf{\tilde{n}}(\Ta \cup \Tb)\\
&\geq \varepsilon
\end{split}
\end{equation}
Therefore, $\varepsilon$ is the minimum translational distance that separates $\Ta, \Tb$.
\end{proof}

\end{document}